\DeclareMathOperator{\Tr}{Tr}
\DeclareMathOperator{\Hom}{Hom}
\DeclareMathOperator{\Rep}{Rep}
\newcommand{\xt}{\times}
\newcommand{\ot}{\otimes}
\newcommand{\Ab}[1]{{#1}_{Ab}}
\newcommand{\GAb}[1]{\underline{{#1}_{Ab}}}
\newcommand{\one}{\mathbf{1}}
\newcommand{\Z}{\mathbb{Z}}
\newcommand{\Q}{\mathbb{Q}}
\newcommand{\bl}{\boldsymbol} 
\newcommand{\ii}{\mathrm{i}}
\newcommand{\ee}{\mathrm{e}}
\newcommand{\<}{\langle} 
\renewcommand{\>}{\rangle} 
\newcommand{\Ref}[1]{Ref.~\onlinecite{#1}}
\newcommand{\sgn}{{\rm sgn}}
\newcommand{\cC}{ {\cal C} } 
\newcommand{\cD}{ {\cal D} }
\newcommand{\bpm}{\begin{pmatrix}}
\newcommand{\epm}{\end{pmatrix}}
\theoremstyle{definition}
\newtheorem{thm}{Theorem}
\newtheorem{conj}{Conjecture}
\newcommand{\ov}{\overline}
\newcommand{\Fun}{\mathrm{Fun}}
\newcommand{\prlsec}[1]{\medskip \noindent\textbf{#1:}}
\begin{document}

\title{Matrix formulation for non-Abelian families}

\author{Tian Lan} 
\affiliation{Institute for Quantum Computing,
  University of Waterloo, Waterloo, Ontario N2L 3G1, Canada}
\affiliation{Center for Quantum Computing, Peng Cheng Laboratory,
  Shenzhen 518055, China}
  \begin{abstract}
    We generalize the $K$ matrix formulation to non-trivial non-Abelian families of 2+1D topological orders. Given a topological order $\mathcal C$, any topological order in the same non-Abelian family as $\mathcal C$ can be efficiently described by $\boldsymbol a=(a_I)$ where $a_I$ are Abelian anyons in $\mathcal C$, together with a symmetric invertible matrix $K$, $K_{IJ}=k_{IJ}-t_{a_I,a_J}$ where $k_{IJ}$ are integers, $k_{II}$ are even and $t_{a_I,a_J}$ are the mutual statistics between $a_I,a_J$. In particular, when $\mathcal C$ is a root whose rank is the smallest in the family, $K$ becomes an integer matrix. Our results make it possible to generate the data of large numbers of topological orders instantly.
  \end{abstract}

\maketitle

\prlsec{Introduction}
Topological phases of matter have drawn more and more research interest during
recent years. A most remarkable feature of topological phases is that there can
be several quantum states which are ``topologically'' degenerate. Such
degeneracy is robust against any local perturbation, thus these states can be
employed as qubits that are automatically immune to local noises. Given the
possible application in quantum memory and quantum computation, it is then
natural to ask how to produce the desired topological degeneracy.

One source of topological degeneracy is to put topological ordered system on a
manifold with nontrivial topology~\cite{Wen89,Wen90,WN90,Kitaev9707021}. This approach is not
ideal: for one reason, it is not easy to shape a physical system into nontrivial
manifold such as a torus; for another, to manipulate the degenerate ground
states one has to perform non-local operations.

Another source of topological degeneracy is to trap several anyonic
quasiparticles. By braiding and fusion of these anyons, it is
possible to realize universal topological quantum computation~\cite{FKLW01}. For
an anyon $i$, we
use the quantity $d_i$, called the quantum dimension, to measure the effective
topological degeneracy carried by $i$. When there is a large number $N$ of anyon
$i$ trapped, the topological degeneracy is of the order $d_i^N$.

Thus for anyons to produce desired topological degeneracy, it is necessary that
$d_i>1$. An anyon with $d_i=1$ is called Abelian while with $d_i>1$ is called
non-Abelian. If all the anyons in a topological order are Abelian, it is called
an Abelian topological order. Clearly Abelian topological phases are useless in
the braiding-fusion based topological quantum computation.

In \Ref{LW1701.07820} we proposed the generalized hierarchy construction that can add or
remove Abelian anyons to or from any topological order. Two topological orders
which can be connected by such construction are of the same ``non-Abelian
family'', which is the equivalence class up to Abelian topological orders. The
non-Abelian family captures the invariants of non-Abelian anyons, and we expect
that topological orders in the same non-Abelian family behave similarly in
topological quantum computation.

However, the construction in \Ref{LW1701.07820} is performed in a step-by-step
manner. Given a topological order $\cC$, it is not easy to calculate the
property of another
topological order in the same non-Abelian family that requires several steps of
hierarchy constructions from $\cC$. This letter aims at resolving such difficulty. 
We showed that given a topological order $\cC$, any topological order in the
same non-Abelian family can be efficiently represented by a sequence of Abelian
anyons in $\cC$ together with a $K$ matrix. When $\cC$ is the trivial
topological order, our result reduces to the original $K$ matrix formulation for
Abelian topological orders~\cite{WZ92}.

\prlsec{One-step generalized hierarchy construction}
We first review and refine the construction proposed in \Ref{LW1701.07820}.
The main idea is to let Abelian anyons
form an effective Laughlin-like state~\cite{Laughlin1983}. This
idea dates back to Haldane and Halperin, known as ``hierarchy''
construction~\cite{Haldane83,Halperin84}.
But below we discuss it at a more general level.

We start with a topological phase $\cC$. The anyons in $\cC$ are labeled by
$i,j,k,\cdots$.  Let $a_c$ be an Abelian anyon in $\cC$ with topological spin
$s_{a_c}$. $s_{a_c}$ determines the self statistics of $a_c$: exchanging two
$a_c$ anyons leads to the phase factor $\ee^{2\pi\ii s_{a_c}}$.  We
try to make $a_c$ form the Laughlin state,
\begin{align}
  \langle\{z_a\}|\Psi\rangle=\prod_{a<b}(z_a-z_b)^{M_c}\times\ee^{-\frac14\sum|z_a|^2}.
\end{align}
The resulting topological
phase is determined by $\cC$, $a_c$ and $M_c$, which will be denoted by
$\cC_{a_c,M_c}$. 
    Here $z_a,z_b$ are the positions of $a_c$ anyons. $M_c$ must be consistent with
    anyon statistics.
    Consider
    exchanging two $a_c$ anyons, we obtain: a phase factor $\ee^{2\pi\ii
      \frac{M_c}{2}}$ from the wave function and a phase factor $\ee^{2\pi\ii s_{a_c}}$ from anyonic statistics.
     To be consistent, total phase factor must be 1:
     \begin{align}
     \frac{M_c}{2}+s_{a_c}\in \Z.
   \end{align}
	  So we need to take {$M_c=m_c-2s_{a_c}$}, where
	  $m_c$ is an even
	integer.

  Anyon $i$ in the phase $\cC$ may be dressed with a flux $M_i$ in the new phase
  ${\cC_{a_c,M_c}}$.
  \begin{align}
    \Psi(i,M_i)=\prod_b(\xi_i-z_b)^{M_i}\prod_{a<b}(z_a-z_b)^{M_c}\times\ee^{-\frac14\sum|z_a|^2}.
  \end{align}
Here $\xi_i$ is the position of anyon $i$.
Thus an anyon in the new phase is represented by a
pair $(i,M_i)$.
Again, $M_i$ can not be arbitrary. If $a_c$ has trivial mutual statistics with
$i$, $M_i$ can be any integer. Otherwise, consider moving $a_c$ around
$(i,M_i)$ and we obtain: a phase factor $\ee^{2\pi\ii M_i}$ from the flux $M_i$
and a phase factor $\ee^{2\pi\ii {t_i}}$ from the mutual statistics between
    $a_c$ and $i$. The mutual statistics can be extracted from the $S$ matrix,
    $\ee^{2\pi\ii {t_i}}=DS_{i a_c^*}/d_i$, $t_{a_c}=2s_{a_c}$.
To be consistent, total phase factor must be 1:
\begin{align}
  M_i+t_i\in\Z.
\end{align}

  Since the anyon $a_c$ dressed with a flux $M_c$ is a ``trivial excitation'' in the
  new phase:
  \begin{gather}
    \begin{align}
      \Psi(a_c,M_c)&\sim\prod_b^n(\xi_{a_c}-z_b)^{M_c}\prod_{a<b}^n(z_a-z_b)^{M_c}\nonumber\\&=\prod_{a<b}^{n+1}(z_a-z_b)^{M_c},\nonumber\\
     & (a_c,M_c)\sim (\one,0),
    \end{align}
  \end{gather}
 we have the equivalence relation: 
  \begin{align}
    (i,M_i)\sim (i\otimes a_c,M_i+M_c). \label{equivim}
    \end{align}

  Next we list the data of the resulting topological order ${\cC_{a_c,M_c}}$:
  \begin{itemize}
    \item 
      The spin of $(i,M_i)$ is given by the spin of $i$ plus the ``spin'' of  the flux $M_i$:
      \begin{align}
	s_{(i,M_i)}=s_i+\frac{M_i^2}{2M_c}.\label{spinim}
      \end{align}
    \item 
      To fuse anyons $(i,M_i),(j,M_j)$ in the new phase, just fuse $i,j$ as in
      the old phase, and add up the flux:
      \begin{align}
	(i,M_i)\otimes (j,M_j)=\bigoplus_k N^{ij}_k (k,M_i+M_j).
      \end{align}
      And then apply the equivalence relation \eqref{equivim}.
    \item 
      The rank (number of anyon types) of
      ${\cC_{a_c,M_c}}$ is
      \begin{align}
	N^{\cC_{a_c,M_c}}=|M_c|N^\cC.
      \end{align}
    \item 
      The quantum dimensions remain the same
      \begin{align}
	 d_{(i,m)}=d_i.
      \end{align}
    \item 
      The $S$ matrix is
      \begin{align}
	\label{Sd}
	S^{{\cC_{a_c,M_c}}}_{(i,M_i),(j,M_j)}&=\frac{1}{\sqrt{|M_c|}}S^{\cC}_{ij}\ee^{-2\pi\ii\frac{M_iM_j}{M_c}}.
      \end{align}
    \item 
      The chiral central charge is
      \begin{align}
	c^{\cC_{a_c,M_c}}=c^\cC+\sgn M_c .\label{diffc}
      \end{align}
  \end{itemize}

The one-step hierarchy construction is reversible. In
${\cC_{a_c,M_c}}$, choosing $a_c'=(\one,1),\ s_{a_c'}=\frac{1}{2M_c},\ m_c'=0,\ M_c'=-1/M_c$, and repeating
the construction, we will go back to $\cC$.  
Therefore, hierarchy construction defines an
equivalence relation between topological phases. We call the corresponding
equivalence classes the ``non-Abelian families''.
Each non-Abelian family have ``root'' phases with
the smallest rank. Let $\Ab{\cC}$ denote the full subcategory of all Abelian
anyons in $\cC$, $\cC$ is a root if~\cite{Lan2017} and only
if~\cite{LW1701.07820,Lan2017} $\Ab{\cC}$ is a symmetric fusion
category, namely all the Abelian anyons are bosons or fermions with trivial
mutual statistics with each other.

\prlsec{Multiple steps of construction and the matrix formulation}
Now we consider multiple steps of hierarchy constructions and try to write down
the final result at once. 
Note that in the flux label $M_i$ we need to use the mutual statistics in the
previous step, and things get involved when there are multiple steps. To
separate out the mutual statistics and thus make things clearer, we use the ``integer convention''
$(i,m)$,
instead of the ``flux convention'' $(i,M_i)$, where $m-t_i=M_i$.

Now consider starting from a topological order $\cC$ and performing one-step construction
$\kappa$ times. The first step we take $a_1\in \cC_{Ab}$ and even integer
$k_{11}$. The second step we take an Abelian anyon $(a_2\in\cC_{Ab},k_{12})$
and even integer $k_{22}$, where $k_{12}$ is an integer.
The third step we take an Abelian anyon
$((a_3\in\cC_{Ab},k_{13}),k_{23})$
and even integer $k_{33}$, where $k_{13},k_{23}$ are integers.
Keep moving on
and we see that the steps can be summarized by $a_I$ and $k_{IJ}$. Define a
corresponding integer symmetric $\kappa$ by $\kappa$
matrix by setting $k_{IJ}=k_{JI}$.
Denote by $t_{i,a}$ the mutual statistics
between anyon $i$ and Abelian anyon $a$ in $\cC$ ( $\ee^{2\pi\ii
t_{i,a}}$ is the phase factor of braiding $a$ around $i$), by $s_i$ the
spin of anyon $i$ in $\cC$, and set $t_{a,a}=2s_a$.
Let the $K$ matrix be
$K_{IJ}=k_{IJ}-t_{a_I,a_J}$.

Physically, we let the Abelian anyons $a_I,\ I=1,2,\dots,\kappa$ form a
multilayer Laughlin-like state
\begin{align}
  \prod (z^{(I)}_a-z^{(J)}_b)^{K_{IJ}},
\end{align}
where $I$ labels the layer and $z^{(I)}_a$ is the position of $a_I$ anyon.  By a
similar argument as in the one-step case, we know that $K_{IJ}+t_{a_I,a_J}$ must
be an integer and $K_{II}+t_{a_I,a_I}$ must be an even integer. 

Though we are using the integer convention, note that similar to the one-step case, it is the combination
$K_{IJ}=k_{IJ}-t_{a_I,a_J}$ or $M_c=m_c-t_{a_c}$ that determines the final topological
order, not the integer $k_{IJ}$ or $m_c$ alone. The meaning of $k_{IJ}$ or $m_c$
depends on the choice of mutual statistics $t_{i,a}$.

The fusion rule and $T,S$ matrices of the resulting topological order after
$\kappa$ steps can be calculated efficiently via the $K$ matrix as stated in
Theorem \ref{thm.main}. This
result generalizes the $K$ matrix formulation for Abelian topological
orders~\cite{WZ92}.

\begin{thm}\label{thm.main}
  The topological order constructed from root $\cC$ via $\kappa$ steps
  can be summarized by $a_I$ and $K_{IJ}$, where $I,J=1,\dots,\kappa$,
  $a_I\in\cC_{Ab}$, $\det K\neq 0$, $K_{IJ}=K_{JI}$, $K_{IJ}+t_{a_I,a_J}$ are integers
  and $K_{II}+t_{a_I,a_I}$ are even. Let $\bl a$ formally denote the vector $(a_I)$ and
  $\cC_{\bl a,K}$ denote the resulting topological order. $\cC_{\bl a,K}$ is as
  follows:
  \begin{itemize}
    \item Fix a choice of mutual statistics $t_{i,a_I}$ in $\cC$.
    Let $\bl t_{i}$ be the $\kappa$-dimensional vector
      $(t_{i,a_I})$. Anyons are labeled by $(i\in\cC,\bl l)$ where $\bl l$ is a
      $\kappa$-dimensional integer vector, subject to the following equivalence
      relations
      \begin{align}
      \label{ilequiv}
	(i,\bl l)\sim (i\ot a_I,\bl l +K_I-\bl t_i+\bl t_{i\ot a_I}).
      \end{align}
      where $K_I$ is the $I$th column vector of $K$.
      For a different choice of mutual statistics, or
      representative $i'\sim i$, $t_{i',a_I}$ differs from $t_{i,a_I}$ by an integer,
      and $(i',\bl l+\bl t_{i'}-\bl t_{i})\sim (i,\bl l)$.
      $\cC_{\bl a,K}$ does not depend on the choice of mutual statistics or representative in $\cC$.
      
      \item Fusion is given by 
      \begin{align}
      \label{ilfuse}
	(i,\bl l)\ot (j,\bl k)=\oplus N^{ij}_s(s,\bl l+\bl k-\bl t_i-\bl t_j
	+\bl t_s).
      \end{align}
    \item 
      The spin of $(i,\bl l)$ is
      \begin{align}
      \label{ilspin}
	s_{(i,\bl l)}=s_i+\frac12 (\bl l-\bl t_i)^T K^{-1} (\bl l-\bl t_i). 
      \end{align}
    \item The $S$ matrix is
      \begin{align}
      \label{ilmutual}
	S_{(i,\bl l)(j,\bl k)}=\frac 1{\sqrt{|\det{K}|}}S_{ij}\ee^{-2\pi\ii
	(\bl l-\bl t_i)^T K^{-1}(\bl k -\bl t_j)}.
      \end{align}
      \item The rank is $N^{\cC_{\bl a,K}}=|\det K| N^\cC$. The chiral central charge is $c^{\cC_{\bl a,K}}=c^\cC+\sgn K.$ Here $\sgn K$ denotes the index of the matrix $K$, namely the number of positive eigenvalues minus the number of negative eigenvalues.
  \end{itemize}
\end{thm}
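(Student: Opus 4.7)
\medskip
\noindent\textbf{Proof sketch.} I would proceed by induction on the number $\kappa$ of hierarchy steps. The base case $\kappa=1$ is just a translation of the one-step formulas \eqref{spinim}--\eqref{diffc} from the flux convention $(i,M_i)$ to the integer convention via $M_i=m_i-t_i$; writing $K=m_c-2s_{a_c}$ and using $t_{a_c,a_c}=2s_{a_c}$ reproduces \eqref{ilequiv}--\eqref{ilmutual}. This step also fixes once and for all how to translate between the two conventions.

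For the inductive step, assume the theorem holds after $\kappa-1$ steps, so that $\cC':=\cC_{(a_1,\ldots,a_{\kappa-1}),K'}$ is given with $K'$ the leading $(\kappa-1)\times(\kappa-1)$ block of $K$. The $\kappa$th step then applies one-step hierarchy to $\cC'$ using the Abelian anyon $(a_\kappa,\bl v)\in\cC'_{Ab}$, $\bl v=(k_{1\kappa},\ldots,k_{\kappa-1,\kappa})$, and even integer $k_{\kappa\kappa}$. The inductive spin formula gives
\begin{equation}
  s_{(a_\kappa,\bl v)}=s_{a_\kappa}+\tfrac12(\bl v-\bl t_{a_\kappa})^T (K')^{-1}(\bl v-\bl t_{a_\kappa}),
\end{equation}
so the effective one-step parameter is
\begin{equation}
  M_c^{(\kappa)}=k_{\kappa\kappa}-2s_{(a_\kappa,\bl v)}=K_{\kappa\kappa}-u^T(K')^{-1}u,
\end{equation}
where $u=\bl v-\bl t_{a_\kappa}$ is the off-diagonal part of the last column of $K$. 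This is the Schur complement of $K'$ in $K$, so $1/M_c^{(\kappa)}=(K^{-1})_{\kappa\kappa}$, $|M_c^{(\kappa)}|\,|\det K'|=|\det K|$, and $\sgn K=\sgn K'+\sgn M_c^{(\kappa)}$.

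These three identities immediately give the rank and central-charge formulas. For the label set and fusion rule I would combine the inherited equivalence in $\cC'$ with the new one-step equivalence and absorb the flux coordinate of the $\kappa$th layer as the $\kappa$th component of $\bl l$; the correction term $-\bl t_i+\bl t_{i\ot a_\kappa}$ in \eqref{ilequiv} emerges from the flux-to-integer conversion at each step. For the spin and $S$ matrix, I would expand $K^{-1}$ in block form and match $\tfrac12(\bl l-\bl t_i)^T K^{-1}(\bl l-\bl t_i)$ against the sum of the $(\kappa-1)$-step spin plus the new one-step contribution $M_\kappa^2/(2M_c^{(\kappa)})$ from \eqref{spinim}; the same block decomposition applied to \eqref{Sd} yields \eqref{ilmutual}.

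The hardest part is not any individual calculation but the bookkeeping around representatives of $i$ and the mutual-statistics convention. The vectors $\bl t_i$ appearing in \eqref{ilequiv}--\eqref{ilmutual} shift by the $I$th column of $K$ modulo integers under $i\mapsto i\ot a_I$, and by an integer vector under a change of convention for $t_{i,a_I}$; all four formulas must be invariant under both transformations. Verifying invariance reduces to the identity $t_{a_I,a_J}=k_{IJ}-K_{IJ}$ and bilinearity of mutual statistics, and once established propagates cleanly through the induction via the Schur-complement identity above.
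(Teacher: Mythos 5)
Your proposal follows essentially the same route as the paper's own proof: induction on $\kappa$, with the $\kappa$th step treated as a one-step construction on $\cC'=\cC_{(a_1,\dots,a_{\kappa-1}),K'}$ using the condensing anyon $(a_\kappa,\bl v)$, the Schur complement $K_{\kappa\kappa}-u^T(K')^{-1}u$ identified with the effective one-step parameter $M_c^{(\kappa)}=m_c-2s_{a_c}$, the block inverse of $K$ matched against the sum of the inherited spin and the new flux contribution, and the determinant and signature identities delivering the rank and central charge. The mutual-statistics bookkeeping you flag as the hard part is exactly what the paper's appendix spends most of its length verifying, and your proposed resolution (bilinearity of $t$ plus $t_{a_I,a_J}=k_{IJ}-K_{IJ}$) is the right one.

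The one point your induction silently assumes and the paper explicitly addresses: the inductive hypothesis needs $\det K'\neq 0$ for the leading $(\kappa-1)\times(\kappa-1)$ block, and more generally every leading principal minor must be nonzero for the step-by-step construction to proceed in the given order, whereas the theorem only hypothesizes $\det K\neq 0$ for the full matrix. The paper closes this by invoking the $GL(\kappa,\Z)$ equivalence $\cC_{\bl a,K}\simeq\cC_{W\bl a,WKW^T}$ to reorder or recombine layers so that all intermediate minors are nonzero; you should add this observation to make the induction watertight.
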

\begin{proof}
  We postpone the lengthy proof to Appendix~\ref{ap.proof}.
\end{proof}
 When $\cC$ is a root whose Abelian anyons
 $\cC_{Ab}$ is a symmetric fusion category,
$a_I,a_J$ are mutually trivial, and $t_{a_I,a_J}$ are all integers. In particular,
we can choose $t_{a_I,a_I}=1$ when $a_I$ is fermionic, and other
$t_{a_I,a_J}=0$.  
In this case the $K$ matrix is an integer matrix and $K_{II}$ is even
 when $a_I$ is a boson and odd when
 $a_I$ is a fermion.

\prlsec{Equivalence relation of $\cC_{\bl a,K}$} Starting form the same topological order $\cC$, different paths of
  construction may result in the same topological order. It is natural to ask
  what is the
  equivalence relation for $(\bl a,K)$. For now, we know three ways to generate
  equivalent $\cC_{\bl a,K}$:

  \begin{enumerate}
  \item The equivalence between the starting point $F:\cC\simeq \cD$ naturally
    give rise to equivalence $\cC_{\bl a,K}\simeq \cD_{F(\bl a),K} $.
  \item ``Integer linear recombination'' of $a_I$, $W\in GL(\kappa,\Z)$
    (namely $W$ is
    an integer matrix with $\det W=\pm 1$),
    $\cC_{\bl a,K}\simeq \cC_{W\bl a, WKW^T}$. We call such transformation as
    the $GL(\Z)$ transformation.
  \item 
  The reversibility of one-step construction means that the topological order
  constructed from
  $\cC$ with $\bpm a_1=a_c\\  a_2=\one\epm,$ $K=\bpm M_c& 1\\1&0\epm$ is equivalent to
  $\cC$. Also $(a_I, K_{IJ})$ is equivalent to $\bpm a_I\\ a\\ \one\epm, \bpm K_{IJ} &
  \bl l_c-\bl t_{a} & 0\\ \bl l_c^T-\bl t_{a}^T & m_c-2s_{a} & 1\\0&1&0\epm$,
  where $a$ can be any Abelian anyon in $\Ab{\cC}$.
  Note that under $GL(\Z)$ transformation, $\bpm K_{IJ} &
  \bl l_c-\bl t_{a} & 0\\ \bl l_c^T-\bl t_{a}^T & m_c-2s_{a} & 1\\0&1&0\epm\sim
  \bpm K_{IJ} &0&0 \\ 0 &-2s_a &1\\ 0&1&0\epm=K \oplus \bpm
  -t_{a,a}&1\\1&0\epm$. Therefore, we have $({\bl a,K})\sim \left(\bl a\oplus
 \bpm a \\ \one\epm,K\oplus \bpm -t_{a,a}&1\\1&0\epm\right)$. We refer to
 $\left( \bpm a\\ \one\epm,\bpm -t_{a,a}&1\\1&0\epm
  \right)$ as the ``trivial bilayer''.
  \end{enumerate}
\begin{conj}
  $\cC_{\bl a,K}$ and $\cC_{\bl a',K'}$ (with exactly the same chiral central
  charge, not modulo 8) are equivalent if and only if,
  up to automorphisms of $\cC$ and $GL(\Z)$ transformations, $(\bl a\oplus \bl
  b, K\oplus X)\sim (\bl a' \oplus \bl b', K'\oplus X')$ where $(\bl b,X)$ and
  $(\bl b',X')$ are direct sums of trivial bilayers $\left( \bpm a\\ \one\epm,\bpm -t_{a,a}&1\\1&0\epm
  \right)$.
\end{conj}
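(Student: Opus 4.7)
The conjecture has an easy ($\Leftarrow$) direction and a hard ($\Rightarrow$) direction. For $\Leftarrow$, the three operations (1)--(3) enumerated immediately before the conjecture---braided auto-equivalences of $\cC$, $GL(\kappa,\Z)$ recombination of layers, and insertion of trivial bilayers---are each shown in the text to preserve $\cC_{\bl a,K}$, so their composition preserves it as well. This direction amounts to a direct verification using the explicit formulas \eqref{ilfuse}, \eqref{ilspin}, \eqref{ilmutual} from Theorem~\ref{thm.main}.

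For the $\Rightarrow$ direction my plan is as follows. First, given a braided equivalence $F:\cC_{\bl a,K}\simeq\cC_{\bl a',K'}$, I extract a braided auto-equivalence $\alpha$ of the common root $\cC$. This uses the fact that the root of a non-Abelian family is characterized by $\Ab{\cC}$ being a symmetric fusion category, together with Theorem~\ref{thm.main}, which lets one ``undo'' the hierarchy steps via the reversibility construction and recover $\cC$ intrinsically from either side. After replacing $(\bl a',K')$ by $(\alpha^{-1}(\bl a'),K')$, we are reduced to the case where $F$ acts as the identity on $\cC$, so the remaining question is purely about comparing the two Abelian ``layer extensions''.

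Next, I view each $\cC_{\bl a,K}$ as an extension of $\cC$ by the finite abelian group determined by $K^{-1}$: the full subcategory of anyons of the form $(\one,\bl l)$ modulo \eqref{ilequiv} is a pre-modular abelian fusion subcategory whose braiding and spin are controlled by the quadratic form $q(\bl l)=\tfrac12(\bl l-\bl t_{\one})^T K^{-1}(\bl l-\bl t_{\one})$ and whose embedding into $\cC_{Ab}$ is encoded by $\bl a$. Since $F$ must respect this decomposition, the content of the conjecture reduces to the following arithmetic assertion: two symmetric matrices $K,K'$ decorated by $\bl a,\bl a'\in\Ab{\cC}^\kappa$, with the same signature $\sgn K=\sgn K'$ (pinned down by the ``not modulo $8$'' hypothesis on $c$) and giving isomorphic $\bl a$-labeled discriminant forms, become $GL(\Z)$-equivalent after direct-summing enough trivial bilayers.

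The main obstacle is this final arithmetic step. For the Abelian case $\cC=\Ve$ (trivial $\bl a$) it is precisely the classical stable-equivalence theorem for integer symmetric bilinear forms due to Milnor and Nikulin: two such forms with the same signature and the same discriminant group become $GL(\Z)$-equivalent after adding enough hyperbolic planes $\bpm 0&1\\1&0\epm$. The exact (not merely $\bmod\,8$) match of $c$ excludes the $E_8$ ambiguity that Nikulin's theorem would otherwise exploit. For non-trivial $\cC$, one needs a ``labeled'' extension of this classification over $\Ab{\cC}$, and must verify that the single family of allowed stabilizations $\bigl(\bpm a\\ \one\epm,\bpm -t_{a,a}&1\\1&0\epm\bigr)$ for $a\in\Ab{\cC}$ already generates every admissible equivalence. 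I expect this labeled refinement of Nikulin's theorem is the genuinely hard step and the reason the statement is left as a conjecture: the obstruction that could prevent it is the existence of ``exotic'' Abelian extensions of $\cC$ with identical modular data but not realizable by the listed bilayers, and ruling out such exotic extensions likely requires a case analysis of symmetric fusion categories $\Ab{\cC}$ via Tannaka-Deligne duality.
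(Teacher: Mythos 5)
This statement is labeled as a \emph{conjecture} in the paper and the paper offers no proof of it, so there is nothing on the paper's side to compare your argument against; the only question is whether your attempt actually closes the conjecture, and it does not. Your ($\Leftarrow$) direction is fine: it is exactly the composition of the three operations the paper verifies just before stating the conjecture, and nothing more needs to be said there. But for ($\Rightarrow$) you have written a plan, not a proof, and you say so yourself at the end. Two genuine gaps remain. First, the reduction step is not established: you assert that a braided equivalence $F:\cC_{\bl a,K}\simeq\cC_{\bl a',K'}$ can be used to ``extract a braided auto-equivalence of the common root $\cC$'' and that $F$ ``must respect'' the decomposition into $\cC$ plus the pointed layer extension. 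Neither claim is automatic. Recovering $\cC$ from $\cC_{\bl a,K}$ by reversing the hierarchy steps involves choices (which Abelian anyons to condense, in which order), and an arbitrary equivalence of the resulting categories need not carry the pointed subcategory generated by the $(\one,\bl l)$ to the corresponding subcategory on the other side, nor fix the non-Abelian part pointwise; when $\cC$ itself has nontrivial Abelian anyons the two pieces are entangled through $\bl a$ and the equivalence relation \eqref{ilequiv}. Second, even granting the reduction, the ``labeled'' stable-equivalence statement over $\Ab{\cC}$ --- that matching signature and matching $\bl a$-decorated discriminant data force $GL(\Z)$-equivalence after adding only the listed trivial bilayers $\bigl(\bpm a\\ \one\epm,\bpm -t_{a,a}&1\\1&0\epm\bigr)$ --- is precisely the content of the conjecture, and you leave it unproven. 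Your identification of the classical Milnor--Nikulin theorem as the $\cC=\Ve$ special case and of the exact (not mod $8$) central-charge hypothesis as what excludes the $E_8$ ambiguity is a sensible and correct framing of where the difficulty lies, but it is a diagnosis, not a resolution.

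In short: you have correctly located the hard kernel of the problem and verified the trivial direction, which is a reasonable contribution, but the statement remains a conjecture after your argument exactly as it does in the paper.
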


\prlsec{The formal categorical formulation}
We give the formal basis independent formulation of the above
constructions.
Let $\cC$ be a braided fusion category, $\alpha_{A,B,C}, c_{A,B}$ denote the
associator and braiding in $\cC$, $\GAb{\cC}$ denote the Abelian group
corresponding to the pointed subcategory $\Ab{\cC}$, and $t: \mathrm{Irr}(\cC) \xt
\GAb{\cC}\to \Q$ denote the mutual statistics between simple objects and
pointed ones, namely $\ee^{2\pi\ii
t(i,a)}=\frac 1{d_i}\Tr c_{a,i}c_{i,a}$; in particular, the diagonal entries
are related to exchange statistics $\ee^{\ii\pi t(a,a)}=\Tr c_{a,a}$.

Let $\Z^\kappa$ be a free Abelian group with $\kappa$ generators. It can be
naturally extended  to a $\kappa$ dimensional vector space over $\Q$. Let
$\ov\Z^\kappa:=\Hom(\Z^\kappa, \Q)$ denote the ``dual space'', the space of
$\Q$-linear
functions. Conventionally, we use $x,y,\dots$ to denote elements in $\Z^\kappa$
and $f(-),g(-),\dots$, or simply $f,g$ when not confusing, to denote functions in $\ov\Z^\kappa$.

Let $K:\Z^\kappa\xt\Z^\kappa\to \Q$ be a non-degenerate symmetric bilinear form.
It defines an isomorphism from $\Z^\kappa$ to $\ov\Z^\kappa$, by $x \mapsto
K(x,-)=K(-,x)$. Denote the inverse map by $\tilde K$, thus
\begin{align}
  \tilde K(K(x,-))=x, \quad K(\tilde K(f),x)=f(x).
\end{align}
There is then a natural non-degenerate symmetric bilinear form $\ov K$ on $\ov\Z^\kappa$
induced from $K$, via
\begin{align}
  \ov K( f,g)=K(\tilde K(f),\tilde K(g))=f(\tilde K(g))=g(\tilde K(f)).
\end{align}
If one chooses a basis of $\Z^\kappa$ and the corresponding dual basis of
$\ov\Z^\kappa$, the matrix of $K$ and $\ov K$ are inverse to each other.

We also need to choose $\kappa$ Abelian anyons for each step. This is concluded
in a group homomorphism $\bl a: \Z^\kappa\to \GAb{\cC}$. The bilinear form $K$
needs to satisfy the even integral condition, namely $\forall x,y$, $K(x,y)+t(\bl a(x),\bl a(y))\in
\Z$ and $K(x,x)+t(\bl a(x),\bl a(x))\in 2\Z$.

For a $\kappa$ step construction, we first define a semisimple category
$\cC^\uparrow_{\bl a,K}$. $\cC^\uparrow_{\bl a,K}$ is graded by
$\ov\Z^\kappa/K(2\ker \bl a,-)$ (not
faithful). Take a representative $f\in \ov\Z^\kappa$, the component
$(\cC^\uparrow_{\bl a,K})_{f}$ is a full subcategory of $\cC$ with simple
objects $i$ satisfying
$f(-)+t(i,\bl a(-)) \in \Z$ [note that $K(x,-)$ is an integer for $x\in \ker\bl a$, so
this is well defined for $f+K(2\ker\bl a,-)$].
 Denote the simple objects in $\cC^\uparrow_{\bl a,K}$
by $i_{f}$. We then define the tensor
product and braiding in $\cC^\uparrow_{\bl a,K}$,
\begin{gather}
i_{f}\ot j_{g}=(i\ot j)_{f+g}=\oplus_k N^{ij}_k k_{f+g},\\
  \alpha_{i_{f},j_{g},k_{h}}=\alpha_{i,j,k},\\
  c_{i_{f},j_{g}}=c_{i,j}\ee^{\ii\pi \ov K(f,g)}.\label{eq.fc}
\end{gather}
\eqref{eq.fc} is independent of the choice of representative: $\forall x\in \ker \bl a,$
\begin{align}
  c_{i_{f+K(2x,-)},j_{g}}&=c_{i_f,j_g}\ee^{\ii\pi \ov K( K(2x,-),g)}\nonumber\\
  &=c_{i_f,j_g}\ee^{2\pi\ii g(x)}.
\end{align}
Since $t(j,\bl a(x))=t(j,0)\in \Z$, clearly $g(x)\in \Z$ as desired. Thus
$\cC^\uparrow_{\bl a,K}$ is a braided fusion category graded by $\ov\Z^\kappa/K(2\ker \bl a,-)$.
It is obvious that $d_{i_f}=d_i$.

Observe that for any $x\in \Z^\kappa$,
$\bl a(x)_{K(x,-)}$ is a self boson and mutually trivial to any object $i_f$.
$\bl a(x)_{K(x,-)}$ is a self boson since
\begin{align}
  &\Tr c_{\bl a(x)_{K(x,-)},\bl a(x)_{K(x,-)}}\nonumber\\
  &=\Tr c_{\bl a(x),\bl a(x)}\ee^{\ii\pi \ov
  K(K(x,-),K(x,-)) }\nonumber\\
  &=\ee^{\ii\pi [t(\bl a(x),\bl a(x))+ K(x,x)] }=1.
\end{align}
$\bl a(x)_{K(x,-)}$ is in the M\"uger center\cite{Muger0201017} (mutually trivial to any object
$i_f$) since
\begin{align}
  &\frac 1{d_i}\Tr c_{i_f,\bl a(x)_{K(x,-)}}c_{\bl a(x)_{K(x,-)},i_f}\nonumber\\
  &=\ee^{2\pi\ii [t(i,\bl a(x))+\ov K( f,K(x,-))]}\nonumber\\
&=\ee^{2\pi\ii [t(i,\bl a(x))+f(x))]}=1.
\end{align}
Therefore, $\{ \bl a(x)_{K(x,-)},\ x\in \Z^\kappa \}$ generates a symmetric fusion
subcategory in the M\"uger center of $\cC^\uparrow_{\bl a,K}$ which is equivalent to $\Rep(\<\bl a(x)_{K(x,-)}\>\simeq \Z^\kappa/2\ker \bl a)$.
Condense it~\cite{Kong1307.8244} (take the category of local modules over $\Fun( \Z^\kappa/2\ker \bl a)$), and we obtain the final result
$\cC_{\bl a,K}=(\cC^\uparrow_{\bl a,K})^{\mathrm{loc}}_{\Fun(\Z^\kappa/2\ker \bl a)}$.
In general the associator ($F$ matrix) will change and get complicated after
such anyon condensation process. However, since the condensed anyons are in the
M\"uger center, the braiding and fusion
rules are preserved~\cite{Kong1307.8244,DMNO1009.2117}. Thus if we are only interested in the simple data such
as fusion rules and $T,S$ matrices, it is fine to work in the larger category
$\cC^\uparrow_{\bl a,K}$.

\prlsec{Conclusion and outlook} In this letter we introduced the matrix
formulation for non-Abelian families, which makes it possible to generate any
topological order in the same non-Abelian family as a given one
almost instantly. We have provided a powerful tool, which, on one hand, can help
group known topological
orders~\cite{BBCW1410.4540,Wen1506.05768,LKW1507.04673,LKW1602.05936,LKW1602.05946}
(or modular tensor cagegories~\cite{RSW0712.1377}) into non-Abelian
families, and for simplicity, only the data of one root is necessary to be
listed explicitly; on the other hand, one can efficiently generate the data of
infinitely many possible unknown topological orders. 

The results in \Ref{LW1701.07820} already reduces the classification problem of
all 2+1D topological orders to the classification of all root topological
orders, namely in which the Abelian anyons have trivial self and mutual
statistics. The results in this letter further makes this reduction an efficient
and simple algorithm. In the end, we only need to maintain a list of root topological
orders. It will be interesting to find the canonical (the simplest) form of
$(\bl a, K)$, and then we will have a simple name for each topological
order: the root $\cC$ plus the canonical form of $(\bl a,K)$. Moreover, after fixing a root $\cC$, we should be able to extract all
possible \emph{non-Abelian invariants}~\cite{Lan2017} of this family by studying $\cC$ and
the pair $(\bl a,K)$. These non-Abelian invariants will surely deepen our
understanding on topological phases of matter, as well as on the application of topological materials in quantum computation.

Our construction can also be viewed as a generalization of anyon
condensation~\cite{Kong1307.8244}, where anyons are forced to form an effective
trivial state, and the condensed anyons are necessarily bosons. We make anyons
form effective Laughlin states, and our results imply that the
multilayer Laughlin states are the most general type of states Abelian anyons
can form. From this point of view, it is natural to ask what kind of
nontrivial effective states non-Abelian anyons can form. Furture research along
this line may reveal more exotic relations between topological phases, by
nontrivial condensations of non-Abelian anyons, and further simplify our
understanding of topological orders.

TL thanks Zhihao Zhang and Wenjie Xi for helpful
discussions. This work was done during TL's visit at Center for Quantum
Computing, Peng Cheng Laboratory and Southern University of Science and
Technology.
\bibliography{../library} 
\appendix
\section{Proof of Theorem~\ref{thm.main}}\label{ap.proof}
  We prove the theorem by induction. It is obviously true for $\kappa=1$. Now
  assume that it is true for $\kappa-1$ where $\kappa>1$. Let $K_0$ be the corresponding
  $\kappa-1$ by
  $\kappa-1$ matrix. From $\kappa-1$ to $\kappa$ we choose $a_c=(a_{\kappa},\bl
  l_c)$ and even integer $m_c$. The new $K$ matrix is
\begin{align}
  K_1= \bpm K_0 & \bl l_c-\bl t_{a_\kappa} \\
  (\bl l_c-\bl t_{a_\kappa})^T &  m_c-2s_{a_\kappa} \\
     \epm.
\end{align}
The spin of $a_c$ is
\begin{align}
  s_{a_c}=s_{a_{\kappa}}+\frac12 (\bl l_c-\bl t_{a_\kappa})^T K_0^{-1} (\bl l_c-\bl
  t_{a_\kappa}),
\end{align}
and the mutual statistics between $(i,\bl l_0)$ and $a_c$ is
\begin{align}
t_{(i,\bl l_0)} = t_{i,a_\kappa}+ (\bl l_c-\bl t_{a_\kappa})^T K_0^{-1} (\bl
l_0-\bl t_i).
\end{align}

First, as long as $m_c-2s_{a_c}\neq 0$, $K_1$ is invertible with
\begin{align}
 K_1^{-1} = 
\bpm
 K_0^{-1} +\frac{ K_0^{-1} (\bl l_c-\bl t_{a_\kappa}) (\bl l_c-\bl t_{a_\kappa})^T K_0^{-1}}{ m_c - 2s_{a_c}}
& - \frac{K_0^{-1} (\bl l_c-\bl t_{a_\kappa})}{ m_c - 2s_{a_c}} \\
- \frac{(\bl l_c-\bl t_{a_\kappa})^T K_0^{-1} }{ m_c - 2s_{a_c}} &  \frac{1}{ m_c - 2s_{a_c}}
\epm
\end{align}
Also 
\begin{align}
  \det(K_1)&=\det\bpm K_0 & \bl l_c-\bl t_{a_\kappa} \\
  (\bl l_c-\bl t_{a_\kappa})^T &  m_c-2s_{a_\kappa} \\
  \epm \nonumber\\
  &= \det(K_0)(m_c-2s_{a_\kappa}-(\bl l_c-\bl t_{a_\kappa})^T K_0^{-1} (\bl l_c-\bl
  t_{a_\kappa}))
     \nonumber\\
     &=(m_c-2s_{a_c})\det(K_0).
\end{align}
Thus $\det K$ accounts for the increment of rank, total quantum dimension, as
well as the normalization of $S$ matrix.
Also $\sgn K=\sgn K_0+ \sgn (m_c-2 s_{a_c})$ accounts for the increment of chiral central charge.

The new anyons are labeled by $(i,\bl l_0,m)$ where $m$ is an integer. Combine
$\bl l$ and $m$ into a $\kappa$-dimensional vector $\bl l^T=(\bl l_0^T,m)$. We
only need to verify the spin, equivalence relations and fusion rule of $(i,\bl
l)$; $S$ matrix follows directly.

The spin of
$(i,\bl l_0,m)=(i,\bl l)$ is
\begin{align}
  s_{(i,\bl l)}  &= s_{(i,\bl l_0)}+\frac{(m-t_{(i,\bl
  l_0)})^2}{2(m_c-2s_{a_c})}\nonumber\\
&=s_i+\frac12 (\bl l_0-\bl t_i)^T K_0^{-1} (\bl l_0-\bl t_i)+\frac{(m-t_{(i,\bl
l_0)})^2}{2(m_c-2s_{a_c})}.
\end{align}
While (using the same notation for $\kappa-1$ and $\kappa$ dimensional $\bl
t_i$) 
\begin{align}
  &\frac12 (\bl l-\bl t_i)^T K_1^{-1} (\bl l-\bl t_{i})\nonumber\\
  &=\frac12 \left((\bl l_0-\bl t_i)^T,m-t_{i,a_{\kappa}}\right) K_1^{-1} \left(
  \begin{array}{c}
    \bl l_0-\bl t_{i}\\ m-t_{i,a_\kappa}
  \end{array}
\right)\nonumber\\
 &= \frac12 \Big((\bl l_0-\bl t_i)^T K_0^{-1} (\bl l_0-\bl t_i) 
   \nonumber\\
   &+
 \frac{(m-t_{i,a_\kappa})^2+(t_{(i,\bl l_0)}-t_{i,a_\kappa})^2 -2
 (m-t_{i,a_\kappa}) (t_{(i,\bl l_0)}-t_{i,a_\kappa}) }{m_c-2s_{a_c}}  \Big) .
\end{align}
Indeed we have
\begin{align}
	s_{(i,\bl l)}=s_i+\frac12 (\bl l-\bl t_i)^T K_1^{-1} (\bl l-\bl t_i). 
\end{align}
For $\kappa-1$ we have equivalence relations
\begin{align}
(i,\bl l_0)\sim (i\ot a_I,\bl l_0 +(K_0)_I-\bl t_i+\bl t_{i\ot a_I}).
\end{align}
For $(i,\bl l_0,m)$, one equivalence relation comes from condensing
$a_c=(a_\kappa,\bl l_c)$ with even integer $m_c$,
\begin{widetext}
\begin{align}
  (i,\bl l_0,m)\sim (i\ot a_\kappa, \bl l_0+\bl l_c -\bl t_i-\bl t_{a_\kappa}
    +\bl t_{i\ot a_\kappa},
    m+m_c-t_{(i,\bl l_0)}-t_{(a_\kappa,\bl l_c)}+t_{(i\ot a_\kappa, \bl l_0+\bl l_c -\bl t_i-\bl t_{a_\kappa}
  +\bl t_{i\ot a_\kappa})}),
\end{align}
where
\begin{align}
  &-t_{(i,\bl l_0)}-t_{(a_\kappa,\bl l_c)}+t_{(i\ot a_\kappa, \bl l_0+\bl l_c -\bl t_i-\bl t_{a_\kappa}
  +\bl t_{i\ot a_\kappa})}\nonumber\\
  &=-t_{i,a_\kappa}-t_{a_\kappa,a_\kappa}+t_{i\ot a_\kappa,a_\kappa}
  +(\bl l_c-\bl t_{a_\kappa})^T K_0^{-1}(\bl t_i-\bl l_0+ \bl t_{a_\kappa}-\bl
    l_c- \bl t_{i\ot a_\kappa}+ \bl l_0+\bl l_c -\bl t_i-\bl t_{a_\kappa}
  +\bl t_{i\ot a_\kappa})\nonumber\\
  &=-t_{i,a_\kappa}-t_{a_\kappa,a_\kappa}+t_{i\ot a_\kappa,a_\kappa}.
\end{align}
Thus
\begin{align}
  (i,\bl l)\sim (i\ot a_\kappa, \bl l + (K_1)_\kappa- \bl t_i+\bl t_{i\ot
  a_\kappa}),
\end{align}
where $(K_1)_\kappa^T= (\bl l_c^T-\bl t_{a_\kappa}^T,m_c-2s_{a_\kappa}).$
The other equivalence relations come from choosing a different representative of
$(i,\bl l_0)$; for $I=1,\dots,\kappa-1$,
\begin{align}
  (i,\bl l_0,m)\sim (i\ot a_I,\bl l_0+(K_0)_I-\bl t_i+\bl t_{i\ot a_I}, m-t_{(i,\bl
  l_0)}+t_{(i\ot a_I,\bl l_0+K_I-\bl t_i+\bl t_{i\ot a_I})}),
\end{align}
where
\begin{align}
  &-t_{(i,\bl
  l_0)}+t_{(i\ot a_I,\bl l_0+(K_0)_I-\bl t_i+\bl t_{i\ot a_I})}\nonumber\\
  &=-t_{i,a_\kappa}+t_{i\ot a_I,a_\kappa}+(\bl l_c-\bl t_{a_\kappa})^T
K_0^{-1}(\bl t_i-\bl l_0-\bl t_{i\ot a_I}+\bl l_0+(K_0)_I-\bl t_i+\bl
t_{i\ot a_I})\nonumber\\
&=-t_{i,a_\kappa}+t_{i\ot a_\kappa,a_\kappa}+(\bl l_c-\bl t_{a_\kappa})^T_I.
\end{align}
Thus
\begin{align}
  (i,\bl l)\sim (i\ot a_I,\bl l+ (K_1)_I-\bl t_i+\bl t_{i\ot a_I}),
\end{align}
where $(K_1)_I^T=((K_0)_I^T,(\bl l_c-\bl t_{a_\kappa})_I)$,
$I=1,\dots,\kappa-1$.

The fusion of $(i,\bl l_0,m)$ and $(j,\bl k_0,n)$ is
\begin{align}
  (i,\bl l_0,m)\ot (j,\bl k_0,n)=\oplus N^{ij}_s(s,\bl l_0+\bl k_0 -\bl t_i-
  \bl t_j +\bl t_s, m+n-t_{(i,\bl l_0)}-t_{(j,\bl k_0)}+ t_{(s,\bl l_0+\bl k_0 -\bl t_i-
\bl t_j +\bl t_s)}),
\end{align}
where
\begin{align}
  &-t_{(i,\bl l_0)}-t_{(j,\bl k_0)}+ t_{(s,\bl l_0+\bl k_0 -\bl t_i-
\bl t_j +\bl t_s)}\nonumber\\
&=-t_{i,a_\kappa}-t_{j,a_\kappa}+t_{s,a_\kappa}+ (\bl l_c-\bl t_{a_\kappa})^T
K_0^{-1}(\bl t_i-\bl l_0+\bl t_j-\bl k_0-\bl t_s+\bl l_0+\bl k_0 -\bl t_i-
\bl t_j +\bl t_s)\nonumber\\
&=-t_{i,a_\kappa}-t_{j,a_\kappa}+t_{s,a_\kappa}.
\end{align}
\end{widetext}
Thus we do have
\begin{align}
  (i,\bl l)\ot (j,\bl k)=\oplus N^{ij}_s(s,\bl l+\bl k-\bl t_i-\bl t_j +\bl
  t_s).
\end{align}
In the above proof, we need to assume that $\det K_0\neq 0$. As we prove by induction, this in fact means that we need to assume that $\det (K_{IJ},\ I,J=1,2,\dots, n)\neq 0$ for any $n<\kappa-1$. However, such assumption is inessential and can be dropped, given the following transformation on $(\bl a,K)$: 
    for an integer
      matrix $W$ with $\det W=\pm 1$, $(a_I,K_{IJ})$ is equivalent to
      $(a'_I=\ot_J a_J^{\ot W_{IJ}}, K'= WKW^T)$. The fact that
      $t_{a'_I,a'_J}=\sum_{PQ} W_{IP} t_{a_P,a_Q} W_{JQ}$ implies the
      transformation for the $K$ matrix.
      As $a_I$ are in an Abelian group, it is convenient to write in the additive
      convention $a'_I=\sum_J W_{IJ} a_I$, or simply $\bl a'=W \bl a$.
      Thus $\cC_{\bl a,K}\simeq \cC_{W\bl a, WKW^T}$ for integer matrix $W$ with
      $\det W=\pm 1$. More precisely, the equivalence is given by $(i,\bl l)\mapsto (i,W\bl l)$. Note that $\bl t'_i=(t_{i,a'_I})=W\bl t_i$. It is straightforward to check that this map is compatible with the equivalence relation \eqref{ilequiv}, and preserves fusion \eqref{ilfuse}, spin \eqref{ilspin}, and $S$ matrix \eqref{ilmutual}.
\end{document}